\newtheorem{theorem}{Theorem}
\newtheorem{lemma}{Lemma}
\newtheorem{remark}{Remark}
\theoremstyle{definition}
\newtheorem{definition}{Definition}
\theoremstyle{remark}
\newtheorem{example}{Example}
\title{\LARGE \bf Real Eventual Exponential Positivity of Complex-valued Laplacians:Applications to Consensus in Multi-agent Systems}
\author{Aditi Saxena$^{1}$ \hspace{25 mm} Twinkle Tripathy$^{2}$\hspace{25 mm} Rajasekhar Anguluri$^{3}$
\thanks{$^{1}$Aditi Saxena is a research scholar with the Department of Electrical Engineering, Indian Institute of Technology Kanpur, India. (email: {\tt\small saditi23@iitk.ac.in)}}%
\thanks{$^{2}$Twinkle Tripathy is with the Department of Electrical Engineering, Indian Institute of Technology Kanpur, India. (email: 
{\tt\small ttripathy@iitk.ac.in).}}%
\thanks{$^{3}$Rajasekhar Anguluri is with the Department of Computer Science and Electrical Engineering,  University of Maryland, Baltimore County, MD 85281, USA. (e-mail: {\tt\small rajangul@umbc.edu}).}
 }
\begin{document}
\maketitle
\thispagestyle{empty}
\pagestyle{empty}

\begin{abstract}
In this paper, we explore the property of eventual exponential positivity (EEP) in complex matrices. We show that this property holds for the real part of the matrix exponential for a certain class of complex matrices. Next, we present the relation between the spectral properties of the Laplacian matrix of an unsigned digraph with complex edge-weights and the property of real EEP. Finally, we show that the Laplacian flow system of a network is stable when the negated Laplacian admits real EEP. Numerical examples are presented to demonstrate the results.
\end{abstract}

\section{INTRODUCTION}
\looseness=-1 Consensus problems have been studied widely in the literature of multi-agent systems. Applications stretch from the synchronization of coupled oscillators in power and brain networks to the coordination among autonomous vehicles. For Laplacian flow systems\footnote{The Laplacian flow is a linear autonomous system evolving in continuous time, where the system matrix is defined by the network's Laplacian matrix.}, contemporary research focuses on consensus problems in networks having real edge-weights. In this work, we consider networks with complex edge-weights (shortly, complex networks), motivated by network science research on emergent behaviors in systems with inherently complex-valued weights. Some applications of complex-valued networks include quantum information \cite{kubota2021quantum}, electrodynamics \cite{muranova2020electrical}, computational social science \cite{hoser2005social}, graph neural networks \cite{kobayashi2010nnl}.

Our primary goal is to introduce and characterize the \emph{real eventual exponential positivity} (EEP) property for (complex-valued) Laplacians associated with complex networks. Extending this property to complex-valued Laplacian matrices is non-trivial. At the high-level, it states that the real component of the complex-valued exponential matrix $(e^{-Lt})$ is positive for sufficiently large positive value of $t$, where $L$ is the $n$-dimensional complex Laplacian matrix. At first glance, it may seem that the exponential matrix should be positive for any $t\geq 0$. While this is true in some cases (e.g., when $L$ is complex symmetric), it does not always hold in general (e.g. when $L$ is weakly connected). Nonetheless, we demonstrate that complex Laplacians for certain classes of networks with sign restrictions admits the real EEP property, thereby enabling the associated flow systems to achieve consensus.

In this paper, the properties of complex Laplacians are inspired from the real-valued Laplacian matrices. When the underlying networks are unsigned, the row sums of the Laplacians are zero. Owing to this, the spectrum of $L$ in complex networks consist of either a simple or a semi-simple zero eigenvalue along with the remaining eigenvalues lying in the open right half of the complex plane (hereafter ORHP). In particular cases addressed in the results, they exhibit a simple zero eigenvalue. Our results are valid for a certain class of complex Laplacians. We summarize our contributions below: 

\begin{itemize}
\item We introduce the notion of \emph{real EEP} for complex-valued matrices which guarantees the eventual positivity of the real part of the matrix exponential. Additionally, we deploy the strong complex Perron-Frobenius property \cite{vargamatrix} to derive the generalized conditions that ensure the real EEP property in a class of complex matrices.
\item Then, we present the Laplacian flows for complex networks. Subsequently, we examine the sufficient conditions in undirected and directed graphs required for the convergence of real and imaginary parts in complex-valued Laplacian flow systems. We utilize the real EEP property of $-L$ which ensures consensus among all the agents.
\end{itemize}
{\textit{Related Research}}: 
In \cite{altafini2019investigating},\cite{fontan2021properties}, the authors show that the property of EEP of the Laplacian $L$ in signed (real)networks leads to (strongly) connectedness of the (di)graph. They prove the inter relation between EEP and the spectrum of Laplacian matrices for signed networks with real edge-weights. In a similar manner, but employing sophistical tools for complex-valued matrices, we show that $-L$ exhibits real EEP iff the (di)graph is (strongly) connected. We also comment on the irreducibility of digraphs consisting of complex edge-weights. Our analysis holds for all unsigned networks which are undirected or weight-balanced digraphs. Further, we utilize the propoerty of real EEP to achieve consensus in multi-agent systems.
\section{PRELIMINARIES}
\subsection{Matrix Algebra}
Let $\mathbb{C}^{n,n}$ denote the space of $n\times n$ matrices with complex entries. For $n=1$, the space is $\mathbb{C}$, and let ${\Re}(m)$ and ${\Im}(m)$ denote the real and imaginary parts of $m \in \mathbb{C}$. A matrix $M:=[m_{ij}] \in  \mathbb{C}^{n,n}$ is nonnegative
if ${\Re}(m_{ij})\geq 0$ and ${\Im}(m_{ij})\geq 0$ for all $i,j \in \left \{ 1,2,...,n \right \}$. A matrix $M \in \mathbb{C}^{n,n}$  is real nonnegative if $\Re(m_{ij})\geq0$ for all $m_{ij}$ 
(hereafter $\Re(M)\geq 0$). A matrix is real eventually nonnegative if $\Re(M^{k})\geq 0 $ for some integer $ k\geq k_{0}$. Positive matrices, with appropriate qualifiers, satisfy the above inequalities with strict inequality.
The conjugate and the normal transpose of matrix $M$ are $M^{H}$ and $M^T$. A matrix is positive semi-definite (PSD) if the eigenvalues of the matrix lie in closed right half of s-plane. The symmetric part of matrix is denoted by $M_{s}=(M+M^{H})/2$. A matrix is normal if $MM^{H}=M^{H}M$ holds \cite{horn2012}. The spectrum of $M$ is denoted by $\operatorname{spec}(M)=\left \{ \lambda_{1},\lambda_{2},...,\lambda_{n} \right \}$ where $\lambda_{i}$ is the $i$-th eigenvalue of $M$ suffices. 
The eigenvalue $\lambda_{i}$ is semi-simple if its algebraic multiplicity equals geometric multiplicity. A permutation matrix (hereafter $P$) is a square binary matrix having a single $1$ in each row and column, and all other entries are zeros.


The following definition of Perron-Frobenius (hereafter PF) property for complex-valued matrices plays a crucial role in our analysis. More details are in \cite{varga2012}. 

\begin{definition}\label{def2} A matrix $M \in \mathbb{C}^{n,n}$ has the strong complex Perron-Frobenius (PF) property if its dominant eigenvalue $\lambda_{1}$ is positive, simple, with $\lambda_{1}>\left|\lambda_{i}\right|, i \in \left \{2,...,n \right \}$. The eigenvector $\mathbf{x}$ associated with $\lambda_1$ is such that  $\Re(\mathbf{x})>\mathbf{0}$ and $\mathbf{x}$ is the right PF eigenvector \cite{varga2012}.
\end{definition}

\subsection{Graph theory}
\looseness=-1
Let $\mathcal{G}(A)=(\mathcal{V} ,\mathcal{E} ,A)$ be the weighted graph, where $\mathcal{V}$ is the vertex set and $\mathcal{E}$ is the edge set. The adjacency matrix is $A=[a_{ij}] \in  \mathbb{C}^{n,n}$. If $A$ is complex symmetric, then $\mathcal{G}(A)$ is undirected. If edges in $\mathcal{G}(A)$ have orientations, the graph is directed. An undirected graph is connected if every vertex can reach every other vertex. A digraph is weakly connected if its undirected counterpart is connected. Instead, a digraph is strongly connected, if there exists a directed path between any pair of vertices. A graph is unsigned if all its edge-weights are nonnegative. When the edge-weights of a graph have positive and negative signs, then the graph is said to be a signed graph. In digraphs, out-degree of a vertex is the sum of edge-weights of all the outgoing edges, where as in-degree is the sum of incoming edge-weights. Collectively for all nodes, the degree matrices are given by $D_{out}=diag(A\mathbb{1}_{n})$ and $D_{in}=diag(A^{T}\mathbb{1}_{n})$, where $\mathbb{1}_{n}$ is the all ones vector. A digraph is weight-balanced if $D_{out}=D_{in}$ (see \cite{bullo2018lectures}).  
The Laplacian matrix is $L=D_{out}-A$ and it is complex symmetric for undirected graphs. 
\section{MAIN RESULTS} \label{sec3}
Here, we consider the property of real EEP for complex-valued Laplacian matrices, and explore implications of EEP property to study consensus in multi-agent systems. Building on Def.~\ref{def2}, we introduce the notion of real eventually exponentially positive (EEP) property for complex-valued matrices. 
\begin{definition}\label{def3} A matrix $M \in \mathbb{C}^{n,n}$ satisfies real EEP if there exists $ t_{0} \in \mathbb{N}$ such that $\Re(e^{Mt})>0$ for all $t\geq t_{0}$.
 \end{definition}
 This property of real EEP ensures that the system trajectories retains the same sign as the initial state vector after a transient period. 
We recall that for a real-valued $n \times n$ matrix $M$, EEP is concerned for a trajectory emanating from an initial point and remains nonnegative for all time thereafter whereas real EEP comments the same for real part of complex entries in the matrix. It guarantees that for sufficiently large time $t \geq t_{0}$, the $n$-dimensional real state of the system
\begin{align*}
\mathbf{\dot{x}}(t)=M\mathbf{x}(t)    
\end{align*}
will be component-wise nonnegative (i.e., $x_i(t)\geq 0$) if the initial state vector $x(0)$ is nonnegative. This property is essential in understanding the stabilization and long-term behavior of systems. We finally utilize this property to comment on the stability of negated Laplacians in complex domain.

The (real)eventually exponential positive matrices are mostly irreducible in (complex)real domain. Reducible matrices may eventually turn exponentially nonnegative \cite{noutsos2008reachability}. Therefore, it is important to study irreducibility for complex networks as both irreducibility and real EEP relate to the strongly connected nature of graphs. Ultimately, we link these properties to achieve consensus in later sections (see Remark \ref{rem:4}). Hence, we address it as follows.
\subsection{Irreduciblity in directed networks}
Irreducibility is a property observed in matrices which is used to evaluate connectivity in digraphs. The property of irreducibility has been thoroughly studied for nonnegative matrices \cite{bullo2018lectures} and has been extended to signed and complex-valued matrices \cite{vargamatrix}. 
The definition along with a key result on irreducibility and its relation to digraphs are discussed next.
\begin{definition}\label{def5}
For $n\geq 2$, the matrix $M\in \mathbb{C}^{n, n}$ is said to be reducible if there exists an $n\times  n$ permutation matrix $P$ such that 
\begin{equation}
 PMP^{T}=\begin{bmatrix}
M_{11} & M_{12}\\ 
 0& M_{22} 
\end{bmatrix}
\end{equation}
where $M_{11}$ is an $r\times r$ submatrix and $M_{22}$ is an $(n-r)\times(n-r)$ submatrix where $1\leq r<n$. If no such permutation matrix exists, then $M$ is irreducible \cite{vargamatrix}. 
\end{definition}
 The following result applies to all unsigned and signed digraphs.
\begin{lemma} \label{Thm:1}
 Consider a digraph for $n\geq 2$, $L \in \mathbb{C}^{n,n}$, then $L$ is irreducible if and only if $\mathcal{G}(A)$ is strongly connected.
\end{lemma}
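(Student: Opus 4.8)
The plan is to exploit the structural correspondence between the zero/nonzero pattern of $L$ and the edge set of $\mathcal{G}(A)$, reducing the claim to a purely combinatorial statement about reachability. Note first that for the Laplacian $L = D_{out} - A$, the off-diagonal entry $-a_{ij}$ is nonzero exactly when $(j,i) \in \mathcal{E}$ (up to the indexing convention for edge orientation), and the diagonal entries are irrelevant to the reducibility question since a permutation similarity $PLP^T$ only moves diagonal entries to diagonal positions. Hence $L$ is irreducible if and only if the directed graph on $\{1,\dots,n\}$ whose arcs are the ordered pairs $(i,j)$ with the $(i,j)$-entry of $L$ nonzero — which is precisely (the reverse of) $\mathcal{G}(A)$ — is strongly connected. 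The key point that makes this work even though the edge-weights are complex is that ``nonzero'' is the only property of an entry that matters for Definition~\ref{def5}; the argument never needs positivity or any sign information, so the classical proof for nonnegative matrices transfers verbatim.

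For the forward direction ($L$ irreducible $\Rightarrow$ $\mathcal{G}(A)$ strongly connected) I would argue by contraposition: if $\mathcal{G}(A)$ is not strongly connected, pick a vertex $v$ and let $\mathcal{R} \subseteq \mathcal{V}$ be the set of vertices reachable from $v$ by a directed path; by assumption $\mathcal{R} \neq \mathcal{V}$, and by maximality there is no arc leaving $\mathcal{R}$. Reindexing so that the vertices of $\mathcal{R}$ come last (this reindexing is a permutation matrix $P$), the absence of arcs from $\mathcal{R}$ to $\mathcal{V}\setminus\mathcal{R}$ forces the corresponding off-diagonal block of $PLP^T$ to vanish, exhibiting the block-triangular form of Definition~\ref{def5} with $1 \le r < n$; hence $L$ is reducible.

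For the converse ($\mathcal{G}(A)$ strongly connected $\Rightarrow$ $L$ irreducible), again by contraposition: if $L$ is reducible, write $PLP^T$ in the block form $\begin{bmatrix} M_{11} & M_{12} \\ 0 & M_{22}\end{bmatrix}$ with $M_{11}$ of size $r \times r$. The zero block says that for every index $i$ in the ``first'' group (size $r$) and every $j$ in the ``second'' group (size $n-r$), the $(j,i)$-entry of $PLP^T$ is zero, i.e. there is no arc from the vertex set of the second block to that of the first block. Then no vertex in the first block is reachable from any vertex in the second block, contradicting strong connectivity. Combining the two contrapositives gives the ``if and only if.''

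The only mild obstacle is bookkeeping: being careful about the orientation convention (whether $a_{ij}\neq 0$ corresponds to arc $i\to j$ or $j\to i$) and about whether ``reachable'' should be read along arcs or reversed arcs, since $L$ is defined via $D_{out}$ and $A$ rather than $A^T$. This is purely notational and is resolved by fixing the convention once at the start; since strong connectivity of $\mathcal{G}(A)$ is equivalent to strong connectivity of its reverse, the equivalence is unaffected. I do not expect any genuine difficulty, and in particular the complex-valuedness of the weights plays no role beyond the observation that we only ever test entries for being zero or nonzero.
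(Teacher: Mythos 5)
Your proposal is correct and follows essentially the same route as the paper's proof: both arguments reduce irreducibility to the zero/nonzero pattern of the off-diagonal entries of $L$ (which coincides with that of $A$, the diagonal $D_{out}$ being immaterial under permutation similarity) and then identify a block-triangular form with the absence of arcs between the two vertex classes of a bipartition. Your write-up is in fact the more careful of the two, making explicit the contrapositive in each direction and the reachable-set construction that the paper only gestures at.
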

\begin{proof} We have from Def.~\ref{def5} that if $L$ is irreducible, then there does not exist any $P$ such that $PLP^{T}$ is block triangular. We partition the set of vertices such that there are two set of vertices,$\mathcal{V}_{1},\mathcal{V}_{2}$ such that $\mathcal{V}_{1}\cap \mathcal{V}_{2}=\left \{\emptyset  \right \}$. Then, $L_{11}$ is an $r\times r$ submatrix, $L_{22}$ is an $(n-r)\times (n-r)$ submatrix where $1\leq r<n$. Similarly, $L_{12},~L_{21}$ are submatrices consisting of edges existing between the set of vertices. So, we have:
\begin{align*}
L\not\equiv \begin{bmatrix}
L_{11} &L_{12} \\ 
0 & L_{22}
\end{bmatrix}
\end{align*}
where $L_{11},L_{12}$ and  $L_{22}$ are not null submatrices. We know that $L=D_{out}-A$ which implies that there exist a walk from every set of vertices to every other set of vertices in matrix $A$. Although reduciblity implies that are no directed edges from $\mathcal{V}_{2}$ to $\mathcal{V}_{1}$ set of vertices where $\mathcal{V}_{1}\cap \mathcal{V}_{2}=\left \{\emptyset  \right \}$. Hence, the graph is strongly connected. So, $A$ is irreducible \cite{vargamatrix}, \cite{meyer2023}, \cite{zaslavsky1999jordan} i.e., there does not exist any $P$ such that $PAP^{T}$ is block triangular. Conversely, we know that 
$L=D_{out}-A$.
Thus, $PLP^{T}$ can not be equivalent to the block triangular form as $D_{out}$ is a diagonal matrix. Hence, $L$ is irreducible. \end{proof}

It is interesting that both adjacency and Laplacian matrices are irreducible whenever $\mathcal{G}(A)$ is strongly connected, regardless of whether these matrices are real-valued or complex-valued. 
\subsection{Real Eventually Exponentially Positive Laplacians}
 The property of \emph{(real) EEP} is closely related to the property of \emph{(real) EP (Eventual Positivity)} in the (complex) real domain. We know that eventual positivity of real-valued matrices is only achieved if and only if the matrices exhibit strong PF property \cite{noutsos2006perron}. In the case of complex-valued matrices, we also know that real eventual positivity is achieved if and only if the matrices exhibit strong complex PF property \cite{varga2012}.  
The following section thoroughly explores the relation between real EP property and real EEP property.
We introduce a set of matrices $\mathcal{P}$ where every matrix $M \in \mathcal{P}$ satisfies the following properties: 
\begin{enumerate}
\item Strong complex PF property: This property ensures a simple and positive dominant eigenvalue and hence the matrix is non-nilpotent.
 \item $\Re(\mathbf{x})\geq\left | \Im(\mathbf{x}) \right |  , \Re(\mathbf{z})\geq\left | \Im(\mathbf{z}) \right |$ where $\mathbf{x}$ and $\mathbf{z}$ are the right and the left eigenvectors corresponding to the dominant eigenvalue, respectively. 
\end{enumerate}
Mathematically, 
\begin{equation}
\label{eq:Pmatrix}
\begin{split}
    \mathcal{P}:=\{M \in\mathbb{C}^{n, n}  
 \mid~\Re(\mathbf{x})\geq\left | \Im(\mathbf{x}) \right |  , \Re(\mathbf{z})\geq\left | \Im(\mathbf{z}) \right | 
  & \\ \text { and }M \text{ follows the strong complex PF property } \}.
\end{split}
\end{equation}
The PF property of matrices aids in illustrating the behavior of matrices when raised to higher powers. The following result demonstrates the implications and advantages of complex PF properties. 
\begin{theorem}\label{stronglemma}
Consider $M$ and $M^{H} \in \mathbb{C}^{n,n}$ such that $\Re(\mathbf{x})\geq\left | \Im(\mathbf{x}) \right |, \Re(\mathbf{z})\geq\left | \Im(\mathbf{z}) \right | $ where $\mathbf{x}$ and $\mathbf{z}$ are the dominant right and left eigenvectors of $M$, respectively. Then, the following statements are equivalent for some scalar $d\geq 0$:
\begin{enumerate}[label=(\roman*)]
\item $ (M+dI) \text{ and } (M+dI)^{H} \in \mathcal{P}$ ,
\item $(M+dI)$ is real eventually positive for all $k\geq k_{0}$,
\item $M$ is real EEP
\end{enumerate}  
where $\mathcal{P}$ is as defined in eqn. \eqref{eq:Pmatrix}.
\end{theorem}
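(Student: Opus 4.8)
The plan is to establish the cycle of implications (i) $\Rightarrow$ (ii) $\Rightarrow$ (iii) $\Rightarrow$ (i), leaning on the known equivalence between real eventual positivity and the strong complex PF property \cite{varga2012}, together with a spectral-mapping argument relating powers of $(M+dI)$ to the matrix exponential of $M$.

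First I would prove (i) $\Leftrightarrow$ (ii). Since $(M+dI)$ and $(M+dI)^H$ lie in $\mathcal{P}$, by definition both enjoy the strong complex PF property and have right/left dominant eigenvectors whose real parts dominate their imaginary parts in modulus. The cited characterization of real eventual positivity for complex-valued matrices states precisely that a matrix is real eventually positive iff it has the strong complex PF property (with the extra eigenvector sign condition ensuring the rank-one spectral-projection term $\mathbf{x}\mathbf{z}^H/(\mathbf{z}^H\mathbf{x})$ has real part strictly positive). So I would invoke that result to pass between membership in $\mathcal{P}$ for the pair $\{(M+dI),(M+dI)^H\}$ and real eventual positivity of $(M+dI)$, being careful to note that the condition on the left eigenvector of $M$ is exactly the condition on the right eigenvector of $M^H$, so the two hypotheses in the theorem statement are consistent and cover both matrices.

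Next, (ii) $\Leftrightarrow$ (iii). Here the key identity is $e^{Mt} = e^{-dt}\,e^{(M+dI)t}$, and more usefully a spectral argument: write the eigendecomposition (or Jordan form) of $M+dI$ with dominant eigenvalue $\lambda_1 = \mu_1 + d > 0$. For large powers $k$, $(M+dI)^k \approx \lambda_1^k\, \mathbf{x}\mathbf{z}^H/(\mathbf{z}^H\mathbf{x})$, and for large $t$, $e^{(M+dI)t} \approx e^{\lambda_1 t}\, \mathbf{x}\mathbf{z}^H/(\mathbf{z}^H\mathbf{x})$; in both cases the leading behavior is governed by the same rank-one matrix $\mathbf{x}\mathbf{z}^H/(\mathbf{z}^H\mathbf{x})$, whose real part is entrywise positive precisely when $\Re(\mathbf{x})$ and $\Re(\mathbf{z})$ dominate the corresponding imaginary parts. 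Thus $\Re((M+dI)^k) > 0$ for all large $k$ iff $\Re(e^{(M+dI)t}) > 0$ for all large $t$ iff $\Re(e^{Mt}) > 0$ for all large $t$ (the scalar $e^{-dt} > 0$ does not affect the sign). I would make this rigorous by bounding the non-dominant terms: since $\lambda_1 > |\lambda_i|$ for $i \geq 2$, the subdominant contributions decay relative to the dominant one both in the power series and in the exponential, so positivity of the real part of the limiting normalized matrix forces eventual positivity of the real part of $(M+dI)^k$ and of $e^{Mt}$.

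The main obstacle I anticipate is making the asymptotic "$\approx$" statements precise when $M+dI$ is not diagonalizable: one must handle Jordan blocks attached to subdominant eigenvalues, showing that polynomial-in-$k$ (resp. polynomial-in-$t$) factors are still dominated by the geometric (resp. exponential) gap $|\lambda_i|/\lambda_1 < 1$. A second delicate point is verifying that the rank-one term $\mathbf{x}\mathbf{z}^H/(\mathbf{z}^H\mathbf{x})$ genuinely has strictly positive real part from the hypothesis $\Re(\mathbf{x}) \geq |\Im(\mathbf{x})|$, $\Re(\mathbf{z}) \geq |\Im(\mathbf{z})|$ — this requires checking that the product of two entries, each with real part at least the modulus of its imaginary part, again has nonnegative real part, and then upgrading nonnegativity to strict positivity using simplicity of $\lambda_1$ and the strong PF eigenvector condition $\Re(\mathbf{x}) > \mathbf{0}$. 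Once these two technical lemmas are in place, the equivalences chain together directly and the theorem follows.
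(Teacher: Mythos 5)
Your proposal is correct and follows essentially the same route as the paper: the same Jordan-decomposition argument showing the normalized powers $(M+dI)^k/(\lambda_1+d)^k$ converge to the rank-one projector $\mathbf{x}\mathbf{z}^H$, the same appeal to the Varga--Noutsos equivalence between real eventual positivity and the strong complex PF property, and the same factorization $e^{Mt}=e^{-dt}e^{(M+dI)t}$ to transfer real EEP between $M$ and $M+dI$. The only cosmetic difference is in (ii)$\Leftrightarrow$(iii), where you compare the common rank-one asymptotics of $(M+dI)^k$ and $e^{(M+dI)t}$ directly while the paper splits the exponential power series at $k_0$ and argues that the positive tail dominates; both arguments are sound, and the delicate points you flag (Jordan blocks on subdominant eigenvalues, strictness of $\Re(\mathbf{x}\mathbf{z}^H)>0$ from the non-strict hypotheses) are glossed over to the same degree in the paper's own proof.
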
%
\begin{proof}  $(i) \implies (ii)$: Using the Jordan decomposition form of $M , M^{H}\in \mathcal{P}$ (set $ d=0 \text{ in } (M+dI) \text{ and } (M+dI)^{H} \in \mathcal{P}$), \begin{equation}\label{jordaneqn1}
M=\begin{bmatrix}
 \mathbf{x}~|~
X_{n,n-1}\end{bmatrix} \begin{bmatrix} \lambda_{1} & 0 \\ 0 & D_{n-1} \end{bmatrix} \begin{bmatrix}
\mathbf{y}^{^{H}}\\Y_{n-1,n} \end{bmatrix}
\end{equation}  
where $\mathbf{y}$ is the left eigenvector corresponding to the dominant eigenvalue of $\lambda_{1}$ and $D_{n-1}$ is the square matrix consisting of all other eigenvalues of matrix $M$ in Jordan form.

Similarly we can decompose $(M+dI)$ as
\begin{equation}\label{jordaneqn2}
(M+dI)=\begin{bmatrix}
 \mathbf{x}~|~
X_{n,n-1}\end{bmatrix} \begin{bmatrix} (\lambda_{1}+d) & 0 \\ 0 & D'_{n-1}\end{bmatrix} \begin{bmatrix}
\mathbf{y}^{^{H}}\\Y_{n-1,n} \end{bmatrix}
\end{equation} where $D'_{n-1}$ is the square matrix consisting of remaining eigenvalues as $\lambda_{i}+d$ 
and $\lambda_{i} \in$ spec($M$).
For a nonnegative integer $k\geq k_0$, from  eqn.~\eqref{jordaneqn2}, we get
\begin{equation*}
\frac{(M+dI)^{k}}{(\lambda_{1}+d)^{k}} =\left[\mathbf{x} \mid X_{n, n-1}\right]\left[\begin{array}{c|c}
1 & 0 \\
\hline 0 & \frac{1}{(\lambda_{1}+d)^{k}} (D'_{n-1})^{k}
\end{array}\right]\left[\begin{array}{c}
\mathbf{y}^{^{H}} \\
\hline Y_{n-1, n}
\end{array}\right]
\end{equation*}
After simple manipulations, we arrive at \begin{equation*}
\lim _{k \rightarrow \infty} \frac{(M+dI)^{k}}{(\lambda_{1}+d)^{k}}=\mathbf{x} \mathbf{y}^{H}=\alpha \mathbf{x}\mathbf{z}^{H}.
\end{equation*}
We know that, $\mathbf{y}^{H}\mathbf{x}=1$ as they are normalized eigenvectors. We can express $\mathbf{y}=\alpha \mathbf{z}$ where $\alpha \in \mathbb{R}_{>0}$ satisfies the assumption of positive real parts in the dominant eigenvectors. Hence, we conclude that 
\begin{equation}\label{real}
\Re\left ( \lim _{k \rightarrow \infty} \frac{(M+dI)^{k}}{(\lambda_{1}+d)^{k}}\right )=\Re(\alpha \mathbf{x}\mathbf{z}^{H}) > \mathbf{0}_{n, n}.    
\end{equation} 
  
 $(ii) \implies (i)$: Let $B=(M+dI)$. The scalar $d\geqslant 0$ can always be chosen such that $B$ is non-nilpotent (i.e., $B^{k}\neq O$ for a positive integer $k$). From Thm. 2.3 \cite{varga2012}, it can be proved that  $B$ and $B^{H}\in\mathcal{P}$. 

$(ii)\implies(iii):$ Let $(M+dI)$ be real eventually positive. Thus, $\Re((M+dI)^{k}) > 0$ for any nonnegative integer $k\geq k_{0}$.
Then, the matrix exponential can be written as:
\begin{align}\label{exp1}
e^{((M+dI)t)} &= I + (M+dI)t + \frac{(M+dI)^{2} t^2}{2!} + \ldots \frac{(M+dI)^{k_{0}} t^{k_{0}}}{k_{0}!}\nonumber \\
    &\quad  + \sum_{k=k_0+1}^{\infty}\frac{(M+dI)^{k}t^{k}}{k!}
\end{align}
The higher order terms in Eqn.~\eqref{exp1} (terms with index greater than $k_0$) are real positive by definition of real eventual positivity. On the other hand, there exists a time $t=t_{0}$ such that the sum of first $k_{0}$ terms of Eqn.~\eqref{exp1} are dominated by the sum of rest of the terms with powers $k \geq k_{0}$ for all $t \geqslant t_{0}$. 
Hence, the whole summation $e^{(M+dI)t}$ is positive and $(M+dI)$ is real EEP. Now, $M$ inherits the property of real EEP from $(M+dI)$. This follows directly from the fact that  $e^{(Mt)}$ can be expressed as $e^{-dt}e^{(M+dI)t}$ where $e^{-dt}$ is positive. Hence, $M$ is real EEP.
 
$(iii)\implies(ii):$ Because $M$ is real EEP, there exists $k=k_{0}$ such that $\Re((e^M)^k)>0$ for all $k \geq k_{0}$. 
We know from the proof of $(ii)\implies(i)$ that real eventual positivity is equivalent to strong complex PF property. Hence, the dominant eigenvalue of $e^M$ and $(e^M)^{H}$ is $e^{\lambda_{1}}$ which is real, simple and positive. It follows that $e^{\lambda_{1}}>e^{\mu_{i}}$, where $\mu_{i}$ represents the $i^{th}$ of the remaining $n-1$ eigenvalues of $M$. Since $\mu_{i} \in \mathbb{C}$, $e^{\lambda_{1}}>|e^{\mu_{i}}|=|e^{\Re(\mu_{i})+i\Im(\mu_{i})}|=e^{\Re(\mu_{i})}$. Consider the Jordan decomposition of matrix exponential of ${M}$ given similar as Eqn~\eqref{jordaneqn1}:
\begin{equation}\label{expjordan}
e^{M}=\begin{bmatrix}
 \mathbf{x}~|~ 
X_{n,n-1}\end{bmatrix} \begin{bmatrix} e^{\lambda_{1}} & 0 \\ 0 & e^{D_{n-1}} \end{bmatrix} \begin{bmatrix}
\mathbf{y}^{H}\\Y_{n-1,n} \end{bmatrix}
  \end{equation} 
It now follows that $\operatorname{spec}(M)=\{\lambda_1,~\mu_i, i\in\{1,2,..,n-1\}\}$ such that $M$ and $e^{M}$ have the same eigenspaces. Hence, $\lambda_{1}$ is the spectral abscissa of $M$. We can always choose $d\geqslant 0$ suitably such that $(\lambda_{1}+d) > |\mu_i + d| \geqslant 0$ holds for all $\mu_i \in \operatorname{spec}(M)$ implying that $(M+dI),(M+dI)^{H}$ satisfies the strong complex PF property. In conjunction with the condition that $\Re(\mathbf{x})\geq\left | \Im(\mathbf{x}) \right |, \Re(\mathbf{z})\geq\left | \Im(\mathbf{z}) \right |$, it implies that $(M+dI),(M+dI)^{H} \in \mathcal{P}$ and is equivalently real eventually positive. The proof is now complete. 
\end{proof}
The strong complex PF property allows us to study the real EEP in arbitrary complex-valued matrices. We specialize the property of real EEP to the complex-valued Laplacian matrices, focusing on the case where the Laplacian matrix is associated with an underlying multi-agent network. We begin with the undirected networks. 

\subsubsection{Undirected unsigned networks} 
In an undirected unsigned graph, the flow of information is bidirectional and edge-weights are nonnegative. Hence, the Laplacian matrix $L$ is complex symmetric and the right and left eigenvectors corresponding to `0' eigenvalue lie in $\operatorname{span}\left \{ \mathbb{1}_{n}\right \}$. Because our motivation is to study the qualitative properties (e.g., consensus) of Laplacian flow systems (see Thm.~\ref{Thm3}), we state our results for $-L$. 

\begin{theorem}\label{Thm:undirected}
Consider an unsigned undirected graph $\mathcal{G}(A)$ such that $L \in \mathbb{C}^{n,n}$. The following statements are equivalent:
\begin{enumerate}[label=(\roman*)]
\item $-L$ is real EEP,
\item $-L$ has `0' as a simple eigenvalue.
\end{enumerate}
\end{theorem}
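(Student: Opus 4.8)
The plan is to connect this statement to Theorem~\ref{stronglemma} by extracting the non-zero part of $-L$ and showing that a suitable shift lands it in the class $\mathcal{P}$. First I would use the fact that for an undirected unsigned graph the Laplacian $L$ is complex symmetric with zero row sums, so $\mathbb{1}_n$ spans both the right and left null space; in particular the dominant eigenvalue of $-L$ (the spectral abscissa) is $0$, with eigenvector $\mathbb{1}_n$, whose real part $\mathbb{1}_n>\mathbf{0}$ and zero imaginary part trivially satisfy $\Re(\mathbf{x})\geq|\Im(\mathbf{x})|$ for both the right and left eigenvectors. Then I would invoke the structural fact already stated in the introduction: for unsigned networks the non-zero eigenvalues of $L$ lie in the ORHP, hence the non-zero eigenvalues of $-L$ lie strictly in the open left half-plane.

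For the direction $(ii)\Rightarrow(i)$: assume $0$ is a simple eigenvalue of $-L$. Then choosing a shift $d>0$ large enough, $(-L+dI)$ has dominant eigenvalue $d>0$ which is simple (since $0$ is simple), and every other eigenvalue $\mu_i+d$ satisfies $|\mu_i+d|<d$ because $\mu_i=-\lambda_i$ has strictly negative real part; this is exactly the strong complex PF property. Combined with the eigenvector sign condition above, $(-L+dI)$ and $(-L+dI)^{H}$ both lie in $\mathcal{P}$, so statement $(i)$ of Theorem~\ref{stronglemma} holds, and by the equivalence $(i)\Leftrightarrow(iii)$ of that theorem, $-L$ is real EEP. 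For the converse $(i)\Rightarrow(ii)$: if $-L$ is real EEP, then by Theorem~\ref{stronglemma} applied with the shift, $(-L+dI)$ satisfies the strong complex PF property, whose definition requires the dominant eigenvalue to be \emph{simple}; since the dominant eigenvalue of $(-L+dI)$ is $d$ and corresponds to the eigenvalue $0$ of $-L$, it follows that $0$ is a simple eigenvalue of $-L$.

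The main obstacle I anticipate is justifying rigorously that the dominant (spectral-abscissa) eigenvalue of $-L$ is precisely $0$ and that, after shifting, it is strictly dominant in modulus over all other shifted eigenvalues — i.e., that no non-zero eigenvalue of $L$ lies on the imaginary axis. For complex symmetric $L=D_{out}-A$ with $A$ complex symmetric, this requires the spectral claim that the non-zero eigenvalues sit strictly in the ORHP; the excerpt asserts this for unsigned networks, so I would cite it, but a careful write-up should note why simplicity of the zero eigenvalue (as opposed to merely semi-simplicity) is the exact condition that separates the two cases. A secondary point to handle cleanly is that Theorem~\ref{stronglemma} is stated for a matrix $M$ with the eigenvector sign conditions on \emph{its} dominant eigenvectors, so I must verify those hypotheses for $M=-L$ before applying it — but as noted, the all-ones eigenvector makes this immediate.
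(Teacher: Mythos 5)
Your proposal is correct and follows essentially the same route as the paper: shift to $B=dI-L$, use the all-ones right/left null vector of the complex symmetric Laplacian to verify the eigenvector sign conditions, place $B, B^{H}$ in $\mathcal{P}$, and transfer real EEP between $B$ and $-L$ via Theorem~\ref{stronglemma}. If anything, your write-up of the $(i)\Rightarrow(ii)$ direction (routing through the simplicity requirement in the strong complex PF property) and your explicit check that $|\mu_i+d|<d$ for large $d$ are slightly more careful than the paper's own wording, but the argument is the same.
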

\begin{proof}
$(i)\implies(ii)$: 
Define the translated matrix as,
\begin{align}
\label{eq:B}
B=dI_{n}-L    
\end{align}
where $d\in\mathbb{R}$. Note that $e^{B}=e^{dI_{n}-L}=e^{d} e^{-L}$, where $e^d$ is a positive scalar. From this identity, we see that $B$ is real EEP if $-L$ is real EEP. Furthermore, 
$B\mathbb{1}_{n}=d\mathbb{1}_{n}-L\mathbb{1}_{n}=d\mathbb{1}_{n}$.


For any undirected $\mathcal{G}(A)$, the matrix $-L$ has `0' as a simple or semi-simple eigenvalue, and the remaining eigenvalues lie in LHP \cite{bullo2018lectures}. 

Let us choose 
\begin{align}\label{d}
 d>\underset{i=2,..,n}{max} \frac{\left|\lambda_i(L)\right|}{2},   
\end{align}
and note that `d' is the dominant eigenvalue of $B$. It is also a simple eigenvalue because $B$ satisfies the strong complex PF property. Thus, the right and the left eigenvectors are in $\operatorname{span}\left \{ \mathbb{1}_{n}\right \}$; and hence, from Thm.~\ref{stronglemma}, it follows that $B,~B^{H} \in \mathcal{P}$. Finally, since $B$ has `d' as a simple eigenvalue, it forces $-L$ to have `0' as a simple eigenvalue.  
%

$(ii)\implies(i)$: Let $-L$ has `0' as a simple eigenvalue. Further, let `d' be given in Eqn.~\eqref{d}. Then, $B=dI-L$ has `d' as its dominant eigenvalue and the corresponding right eigenvector is in $\operatorname{span}\left \{\mathbb{1}_{n}\right \}$. Since $B$ is complex symmetric, both the dominant left and right eigenvectors coincide and are in $\operatorname{span}\left \{  \mathbb{1}_{n}\right \}$. Thus,
$B,~B^{H} \in \mathcal{P}$. From Thm.~\ref{stronglemma}, 
$-L$ is real EEP. 
The proof is complete.
\end{proof}
Observe that Thm.~\ref{Thm:undirected} does not explicitly assume that $\mathcal{G}(A)$ is connected. But condition Thm.~\ref{Thm:undirected}$(ii)$ implies this (\cite{bullo2018lectures}, refer Ch.6). Furthermore, $-L$ (which follows real EEP property) is equivalent to $L$ being positive semi-definite. 

Our previous results discuss the interplay between the eigenspectrum, the property of real EEP, and the positive semi-definiteness of $L$. Crucial to our analysis is the assumption that the graph $\mathcal{G}(A)$ is undirected which ensures that $L$ is (complex) symmetric. In the rest of the section, we relax the assumption that the graph is undirected and thereby weaken the symmetry assumption. 
\subsubsection{Directed unsigned networks}
Consider a strongly connected digraph $\mathcal{G}(A)$ having no self-loops. For the digraph, the adjacency and Laplacian matrices may or might not be complex symmetric. Symmetry implies the graphs are weight-balanced but the vice versa is not true in general. Weight-balanced graphs allow us to invoke the strong complex PF property for the translated matrix $B$ in Eqn.\eqref{eq:B}, and ultimately, derive a result like Thm.~\ref{Thm:undirected} for digraphs. 
\begin{theorem}\label{Thm:2}
Consider an unsigned digraph $\mathcal{G}(A)$ such that $L\in\mathbb{C}^{n,n}$ corresponds to a strongly connected weight-balanced graph, then the following statements are equivalent:
\begin{enumerate}[label=(\roman*)]
    \item $-L$ is real EEP,
    \item $-L$ has `0' as a simple eigenvalue.
\end{enumerate}
\end{theorem}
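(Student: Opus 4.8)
The plan is to mimic the proof of Theorem~\ref{Thm:undirected} but replace the symmetry argument with the weight-balanced structure. The crucial translated matrix is again $B = dI_n - L$ from Eqn.~\eqref{eq:B}, with $d$ chosen as in Eqn.~\eqref{d} so that $d > \max_{i=2,\dots,n}|\lambda_i(L)|/2$; this guarantees that $d$ is the strictly dominant eigenvalue of $B$ in modulus, since the nonzero eigenvalues of $L$ have positive real part (the graph is unsigned, strongly connected, so $L$ has eigenvalues in the ORHP plus a zero), hence those of $-L$ lie in the open left half-plane and shifting by $d$ keeps them inside the disk of radius $d$. Also $e^{B} = e^{d}e^{-L}$, so $B$ is real EEP iff $-L$ is, and this is the bridge to Theorem~\ref{stronglemma}.

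For $(ii)\Rightarrow(i)$: assume $-L$ has $0$ as a simple eigenvalue. Since the digraph is strongly connected, $L$ is irreducible by Lemma~\ref{Thm:1}, and the right null vector of $L$ is $\mathbb{1}_n$ (row sums are zero). The key extra ingredient is weight-balancedness: $D_{out}=D_{in}$ implies $\mathbb{1}_n^{T}L = \mathbb{1}_n^{T}(D_{out}-A) = \mathbb{1}_n^{T}D_{out} - \mathbb{1}_n^{T}A = 0^{T}$ because $\mathbb{1}_n^{T}A$ equals the vector of in-degrees which equals the vector of out-degrees $\mathbb{1}_n^{T}D_{out}$. Hence $\mathbb{1}_n$ is also the left null vector of $L$, so both the dominant right eigenvector $\mathbf{x}$ and the dominant left eigenvector $\mathbf{z}$ of $B$ lie in $\operatorname{span}\{\mathbb{1}_n\}$. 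These vectors are real and positive, so trivially $\Re(\mathbf{x}) \ge |\Im(\mathbf{x})|$ and $\Re(\mathbf{z}) \ge |\Im(\mathbf{z})|$; combined with $B$ having $d$ as a simple, positive, strictly dominant eigenvalue, this gives $B, B^{H}\in\mathcal{P}$. Then Theorem~\ref{stronglemma} (direction $(i)\Rightarrow(iii)$ applied to $M = B - dI = -L$, i.e. using $d$ as the shift) yields that $-L$ is real EEP.

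For $(i)\Rightarrow(ii)$: assume $-L$ is real EEP, equivalently $B$ is real EEP. By Theorem~\ref{stronglemma}, $B$ and $B^{H}$ lie in $\mathcal{P}$ and are real eventually positive; in particular $B$ satisfies the strong complex PF property, so its dominant eigenvalue is simple. Because the nonzero eigenvalues of $L$ have positive real part and, by the choice of $d$, the dominant eigenvalue of $B$ is exactly $d$ (corresponding to the zero eigenvalue of $-L$), simplicity of the dominant eigenvalue of $B$ forces $0$ to be a simple eigenvalue of $-L$. This completes the equivalence.

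The main obstacle — and the point where this proof genuinely differs from the undirected case — is establishing that the left dominant eigenvector of $B$ is still in $\operatorname{span}\{\mathbb{1}_n\}$ without symmetry: this is precisely where weight-balancedness is indispensable, via $\mathbb{1}_n^{T}L = 0^{T}$. A secondary subtlety is confirming that for an unsigned strongly connected (hence irreducible) digraph the zero eigenvalue of $L$ is always at least semi-simple and the remaining spectrum sits in the ORHP, so that the shift $d$ in Eqn.~\eqref{d} indeed makes $d$ the unique eigenvalue of largest modulus; this is standard (e.g.\ \cite{bullo2018lectures}) but should be invoked explicitly. Everything else is a direct transcription of the argument used for Theorem~\ref{Thm:undirected}.
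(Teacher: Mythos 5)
Your proposal follows essentially the same route as the paper: translate to $B=dI_n-L$ with $d$ from Eqn.~\eqref{d}, use strong connectivity and weight-balancedness to place both dominant eigenvectors of $B$ in $\operatorname{span}\{\mathbb{1}_n\}$, conclude $B,B^{H}\in\mathcal{P}$, and invoke Thm.~\ref{stronglemma} in both directions. Your write-up is in fact slightly more complete than the paper's, since you explicitly derive $\mathbb{1}_n^{T}L=0^{T}$ from $D_{out}=D_{in}$ rather than asserting it.
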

\begin{proof}
$(i)\implies(ii)$: Since $\mathcal{G}(A)$ is weight-balanced and strongly connected, the right and left eigenvectors for eigenvalue `d' of $B$ are in  $\operatorname{span}\left \{  \mathbb{1}_{n}\right \}$. The remaining proof follows in the same lines as that of Thm. \ref{Thm:1}; where $B$ and `d' are defined in Eqns. \eqref{eq:B} and \eqref{d}, respectively.  Therefore, $-L$ has a (simple) `0' eigenvalue. 

$(ii)\implies(i)$: If $-L$ has `0' as a simple eigenvalue, then consider the translated matrix $B$ defined in Eqn. \eqref{eq:B} with `d' as given in Eqn.~\eqref{d}. Now, the goal is to ensure $B,B^{H} \in \mathcal{P}$ which implies $-L$ to be real EEP. From Eqn. \eqref{eq:B}, $B$ has `d' as its dominant eigenvalue. Also, the dominant left and right eigenvectors are in $\operatorname{span}\left \{  \mathbb{1}_{n}\right \}$ as $\mathcal{G}(A)$ is weight-balanced and strongly connected. Thus, $B,B^{H} \in \mathcal{P}$. Using Thm.~\ref{stronglemma}, this implies that $-L$ is real EEP. 
\end{proof}
\begin{remark}
The spectrum of $L$ consists of some eigenvalues at the origin and rest are in ORHP. Moreover for strongly connected graphs, the condition Thm. \ref{Thm:2}$(ii)$ straightforwardly implies that $L_{s}$ (the symmetric part of $L$) is positive semi-definite. 
\end{remark}
 Assertions in Thm.~\ref{Thm:2} are valid for only strongly connected and weight-balanced digraphs. Our future work will address if these assertions are valid for weakly connected graphs. 
Hereafter, we present the following numerical examples to illustrate the aforementioned results. 
\begin{example}\label{ex1}
Consider the undirected graph having complex edge-weights shown in Fig.~\ref{fig:undirected}
\begin{figure}[ht]
  \centering
  \begin{tabular}{p{0.21\textwidth}p{0.23\textwidth}}
    \centering
    \includegraphics[scale=0.53]{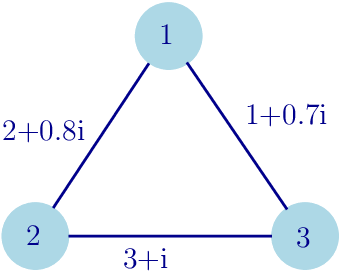}
    \caption{Undirected graph}
    \label{fig:undirected}
&
    \centering
    \includegraphics[scale=0.53]{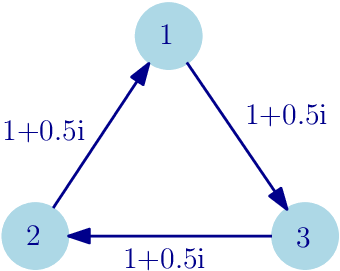}
    \caption{Digraph}
    \label{fig:digraph}
  \end{tabular}
  \label{fig:examples}
\end{figure} 
For the undirected graph, the Laplacian matrix is given by 
\begin{equation*}
L_{1}=\begin{bmatrix}
3+1.5i  & -2-0.8i &  -1-0.7i \\
 -2-0.8i &  5+1.8i & -3-i\\ 
-1-0.7i & -3-i & 4+1.7i\\
\end{bmatrix}. 
\end{equation*}
Here, the Laplacian matrix is complex symmetric. Unlike the real symmetric matrices, complex valued symmetric matrices may have complex eigenvalues. We will observe this in the spectrum of $L$ in this example.
The matrix exponential function of the negated Laplacian matrix evaluated at $t=1$ to determine the property of real EEP in Laplacian matrix as stated in the above results.  
\begin{equation*}
\left.\begin{matrix}
e^{-L_{1}t} 
\end{matrix}\right|_{t=1} = \renewcommand{\arraystretch}{1} 
\setlength{\arraycolsep}{2pt}    
\begin{bmatrix}
\scriptsize 0.33 + 0.01i & \scriptsize 0.33 + 0.002i & \scriptsize 0.34+0.005i\\ 
\scriptsize 0.33 +0.002i & \scriptsize 0.33-0.006i & \scriptsize 0.33 +0.001i\\ 
\scriptsize 0.34 + 0.005i & \scriptsize 0.33 -0.001i & \scriptsize 0.33 - 0.003i
\end{bmatrix}.
\end{equation*}
The eigenvalues are $\operatorname{spec}(L_{1})$=$\left \{ 0,~4.26+2.24i,~ 7.7+ 2.7i \right \}$. Set $d \geq 4.5$ and observe that $\operatorname{\Re}(B)$ is eventually positive, where $B=dI-L$. Consequently, $B \in \mathcal{P}$; so the `0' eigenvalue of $-L_{1}$ is simple, which is equivalent to $-L_1$ satisfying real EEP (see Thm. \ref{Thm:2}). Finally, the eigenvalues of $L_{1}=L_1^T$ lie in the RHP, and hence, $L_{1}$ is PSD. 
 \end{example}
\begin{example}\label{ex2}
 We consider a strongly connected digraph (see Fig. \ref{fig:digraph}) that is weight-balanced and yields a normal Laplacian.
 
The Laplacian matrix here is normal but it is neither symmetric nor Hermitian. Note that not every weight-balanced matrix is normal and the vice-versa also does not hold true.
 \begin{equation*}   
L_{2}=\begin{bmatrix}
1+0.5i  & 0 &  -1-0.5i \\
  -1-0.5i & 1+0.5i & 0\\ 
 0& -1-0.5i & 
1+0.5i\\
\end{bmatrix}\end{equation*}  
The matrix exponential function evaluated at $t=1$ is, 
\begin{equation*}
\left.\begin{matrix}
e^{-L_{2}t} 
\end{matrix}\right|_{t=1} = \renewcommand{\arraystretch}{1} 
\setlength{\arraycolsep}{2pt}    
\begin{bmatrix}
\scriptsize 0.38 + 0.11i & \scriptsize 0.21+ 0.1i & \scriptsize 0.42+ 0.01i\\ 
\scriptsize0.42+ 0.01i & \scriptsize 0.38-0.11i & \scriptsize 0.21 + 0.1i\\ 
\scriptsize 0.21+ 0.1i& \scriptsize0.42+ 0.01i & \scriptsize 0.38- 0.11i
\end{bmatrix}
\end{equation*}
The eigenvalues are $\operatorname{spec}(L_{2})$=$\left \{ 0, 1.06+1.61i, 1.93-0.11i \right \}$. By similar reasoning in Example.~\ref{ex1}, we conclude that $-L_2$ satisfies real EEP. Finally, because $L_2$ corresponds to a weight-balanced digraph and the eigenvalues are in the RHP, $L_{s}$ is PSD. 
\end{example}
The final result of this section below states that a Laplacian flow system (defined over complex linear multi-agent system) achieves consensus if the underlying complex-valued Laplacian satisfies the real EEP property. To our knowledge, this result is the first non-trivial extension of consensus for real-valued to complex-valued systems.
The following result applies to unsigned networks and weight-balanced digraphs where the negated Laplacian matrix exhibits the real EEP property. An example of power network is illustrated in Sec. \ref{sec5}, where the edges are signed, but since $-L$ possesses the real EEP property, consensus is achieved. This makes it clear that our focus is solely on the real EEP property to achieve consensus.
\begin{theorem}\label{Thm3}
 Consider a digraph $\mathcal{G}(A)$ such that $-L \in  \mathbb{C}^{n,n}$ is real EEP. Then, the Laplacian flow system, 
 \begin{align}
 \label{eq:Laplacian_flow}
    \dot{\mathbf{x}}(t) = -L \mathbf{x}(t),
 \end{align}
 attains consensus. 
 \end{theorem}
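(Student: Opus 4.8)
The plan is to show that the complex state $\mathbf{x}(t)$ converges to a common value (consensus) by exploiting the structure of $e^{-Lt}$ guaranteed by the real EEP hypothesis together with the zero-row-sum property of $L$. First I would write the closed-form solution $\mathbf{x}(t)=e^{-Lt}\mathbf{x}(0)$ and observe that, since $L\mathbb{1}_n=\mathbf{0}$, the all-ones vector spans the right eigenspace of $-L$ associated with the eigenvalue $0$; moreover, by Thm.~\ref{Thm:2} (or Thm.~\ref{Thm:undirected} in the undirected case) the real EEP of $-L$ forces this $0$ eigenvalue to be \emph{simple}, with all remaining eigenvalues of $-L$ in the open left half-plane. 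Hence the spectral abscissa of $-L$ is $0$, attained only at the simple eigenvalue $0$.

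Next I would use the Jordan decomposition of $e^{-Lt}$ exactly as in the proof of Thm.~\ref{stronglemma} (see Eqn.~\eqref{expjordan}): writing $-L=\begin{bmatrix}\mathbb{1}_n \mid X_{n,n-1}\end{bmatrix}\,\mathrm{diag}(0, D_{n-1})\,\begin{bmatrix}\mathbf{w}^{H}\\ Y_{n-1,n}\end{bmatrix}$ where $\mathbf{w}$ is the (normalized) left eigenvector of $-L$ for the eigenvalue $0$, so that $\mathbf{w}^{H}\mathbb{1}_n=1$ and every eigenvalue in $D_{n-1}$ has strictly negative real part. Then
\begin{align*}
\lim_{t\to\infty} e^{-Lt} = \mathbb{1}_n \mathbf{w}^{H},
\end{align*}
because $e^{D_{n-1}t}\to O$ as $t\to\infty$. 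Consequently $\mathbf{x}(t)\to \mathbb{1}_n(\mathbf{w}^{H}\mathbf{x}(0))$, i.e. every component of the state converges to the common complex scalar $\mathbf{w}^{H}\mathbf{x}(0)$, which is precisely consensus. In the weight-balanced (or undirected) case one additionally has $\mathbf{w}=\tfrac{1}{n}\mathbb{1}_n$, so the consensus value is the average $\tfrac{1}{n}\mathbb{1}_n^{H}\mathbf{x}(0)$; I would state this as a corollary-style remark.

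The main obstacle is justifying that the real EEP hypothesis indeed pins down the $0$ eigenvalue as simple and the rest as strictly stable in the \emph{directed} setting without assuming weight-balance — the excerpt's earlier theorems only establish the equivalence for undirected or weight-balanced digraphs. I would therefore either (a) restrict the proof to those cases and invoke Thm.~\ref{Thm:undirected}/Thm.~\ref{Thm:2} directly, or (b) argue from Thm.~\ref{stronglemma} that $-L$ real EEP implies $(-L+dI)$ has the strong complex PF property for suitable $d$, hence a simple dominant eigenvalue; since $d$ is dominant for $-L+dI$ exactly when $0$ is dominant (spectral abscissa) for $-L$, and $0$ is always an eigenvalue (zero row sums), simplicity of the dominant eigenvalue of $-L+dI$ transfers to simplicity of $0$ for $-L$, with all other eigenvalues satisfying $\Re(\mu_i)<0$. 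Route (b) is cleaner and keeps the argument self-contained; the only delicate point is checking that the left PF eigenvector of $-L+dI$ is (a positive multiple of) a vector with positive real part so that the limiting rank-one matrix $\mathbb{1}_n\mathbf{w}^{H}$ is genuinely the consensus projector — this follows from $(-L+dI)^H\in\mathcal{P}$ in Thm.~\ref{stronglemma}.
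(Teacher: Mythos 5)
Your proposal follows essentially the same route as the paper's proof: write $\mathbf{x}(t)=e^{-Lt}\mathbf{x}_0$, argue that real EEP forces the $0$ eigenvalue of $-L$ to be simple with the remaining spectrum in the open left half-plane, and use the Jordan decomposition to obtain $\lim_{t\to\infty}e^{-Lt}=\mathbb{1}_n\mathbf{w}^{H}$, hence consensus at $\mathbf{w}^{H}\mathbf{x}(0)$. In fact you are more careful than the paper at the one step it glosses over (the paper simply asserts that $\Re(e^{-Lt})>0$ implies the simple-zero/OLHP spectral structure, whereas your route (b) via Thm.~\ref{stronglemma} supplies the justification), so the proposal is correct and, if anything, tighter.
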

\begin{proof}
The solution of Laplacian flow \eqref{eq:Laplacian_flow} is given by
\begin{equation}\label{eqn5}
    \mathbf{x}(t)=e^{-Lt}\mathbf{x_{0}},
\end{equation}
where $\mathbf{x_{0}}$ is the vector of initial conditions. Provided that
\begin{equation*}
    \Re(e^{-Lt})> 0
\end{equation*} for $t\geqslant t_0$, it follows that $-L$ has a `0' simple eigenvalue and other eigenvalues lie in OLHP. Then, using the Jordan canonical decomposition, one can show that
$\lim _{t \rightarrow \infty} e^{-Lt}= \alpha\mathbb{1}_{n} \mathbf{z}^{H}$, where $\mathbb{1}_{n}$ and $\mathbf{z}$ are the right and left eigenvectors corresponding to the `0' eigenvalue, respectively. Finally, as $t\to\infty$, the final solution is given by
\begin{equation} \label{eqn6}
\mathbf{x}(t)= \mathbb{1}_{n} \mathbf{z}^{H}\mathbf{x_{0}}
\end{equation}
The scalar $\mathbf{z}^{H}\mathbf{x_{0}}$ is constant, and hence, all states are equal. Thus, consensus is achieved.
\end{proof} 
   

From the complex inner product given in Eqn. \eqref{eqn6}, we see that the real and imaginary parts of the state trajectories may converge to different values (see Section \ref{sec5}). 

\begin{remark}\label{rem:4}
The strongly connectedness of a digraph $\mathcal{G}(A)$ implies the property of real EEP of $-L$. Likewise, we know that $L$ is irreducible iff $\mathcal{G}(A)$ is strongly connected (from Thm.~\ref{Thm:1}). This conveys that in a strongly connected graph, agents will always reach consensus. This follows from the fact that $-L$ is real EEP and equivalently irreducible for almost all the matrices. \end{remark} 
\section{SIMULATION RESULTS} \label{sec5}
This section verifies our results in Section~\ref{sec3}. Specifically, we discuss some Laplacian flow systems which attain consensus. The proof as follows from Thm.\ref{Thm3}.
\subsection{Complex-valued Laplacian: Synthetic Networks}
We consider a Laplacian flow system on a network with three nodes. The initial states of the agents are $(6+2i, 2-1i, 4+0.7i)$. 

\begin{figure*}[ht]
  \centering
  \begin{subfigure}{0.31\textwidth}
    \centering
    \includegraphics[scale=0.41]{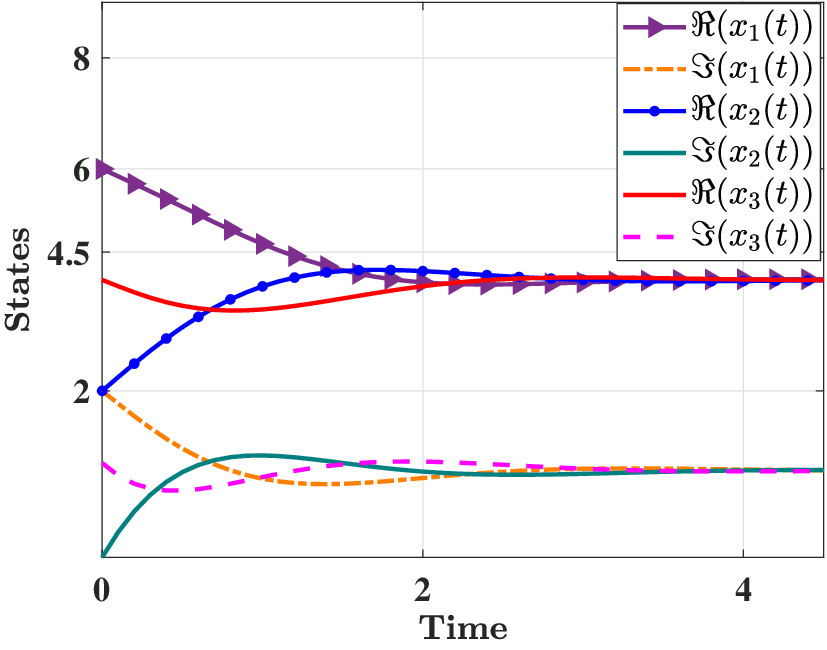}
    \caption{Convergence of states for a weight-balanced digraph.}
    \label{fig2}
  \end{subfigure}
  \hfill
  \begin{subfigure}{0.31\textwidth}
    \centering
    \includegraphics[scale=0.41]{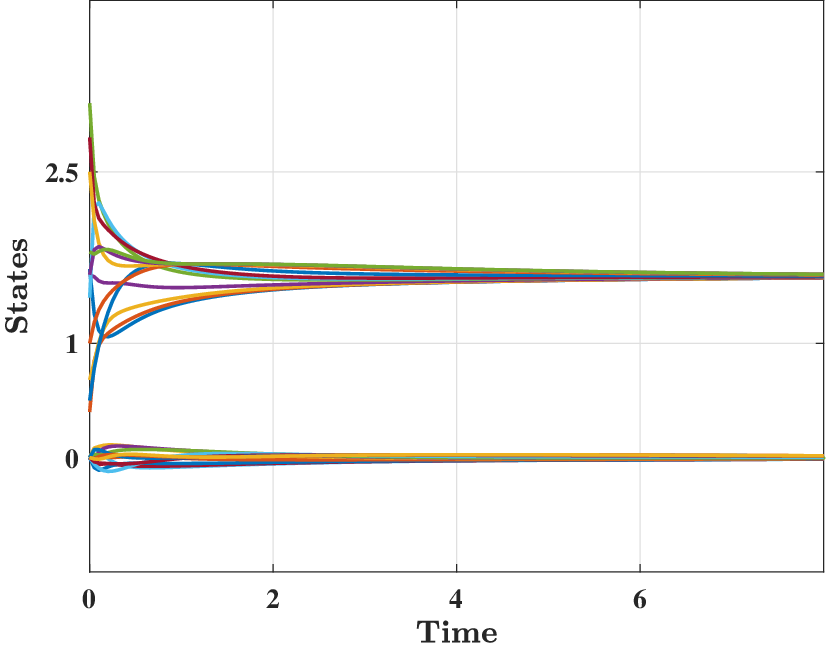}
    \caption{Convergence of states in a power distribution network.}
    \label{ybus}
  \end{subfigure}
  \hfill
  \begin{subfigure}{0.31\textwidth}
    \centering
    \includegraphics[scale=0.41]{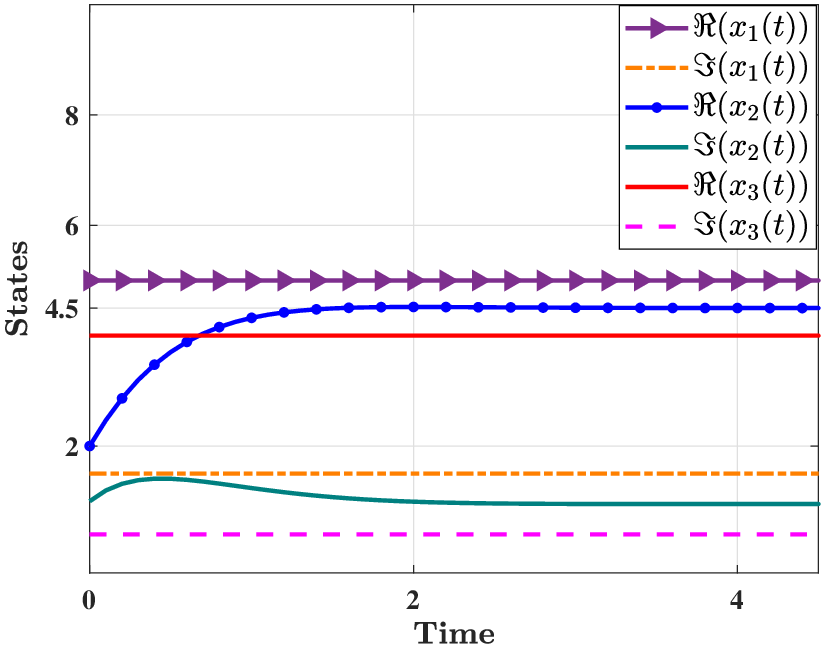}
    \caption{States failing to achieve consensus in a multi-agent network.}
    \label{fig:counter}
  \end{subfigure}
  \caption{Convergence properties of complex Laplacian flow systems}
  \label{fig:examplessim}
\end{figure*}

First, for a multi-agent network governed by the Laplacian matrix in Example~\ref{ex2}, the real and imaginary parts of the state trajectories are shown in Fig.~\ref{fig2}. Each trajectory converges (agents achieve average consensus). This behavior is expected because (i) the digraph is strongly connected and (ii) $-L$ has `0' as a simple eigenvalue. Both properties imply that $-L$ is real EEP; and consensus is guaranteed by Thm.~\ref{Thm3}. 
Since the digraph is weight-balanced, the agents achieve average consensus, i.e., the states converge to the mean of the initial conditions. The simulation result for Example \ref{ex1} converges similarly, given that the Laplacian matrix is complex symmetric.

\subsection{Complex-valued Laplacian: Power Network}
Here, we present an example of a Laplacian matrix that does not satisfy the non-negative adjacency assumption, but still, the state trajectories converge to a common value (see Fig.~ \ref{ybus}). The  Laplacian matrix here corresponds to the admittance matrix of power network. This is because of the real EEP property of $-L$. Additionally, we conclude that Thm. \ref{Thm:2} derives sufficient conditions for achieving the real EEP Property. We consider a complex-valued Laplacian matrix arising in a power distribution network with 12 nodes. We use the open source software Matpower \cite{zimmerman2010matpower} to obtain the Laplacian, which is complex symmetric but not Hermitian. 
\subsection{Counterexample}
We close this section with a modified version of Example \ref{ex2} where $-L$ does not satisfy the property of real EEP, and thus the state trajectories do not converge to a common value. The conditions derived for achieving real EEP in negated Laplacians and reaching consensus are sufficient conditions. Therefore, even if $-L$ is not real EEP, then agents may reach consensus. 
Consider the Laplacian matrix
 \begin{equation}\label{eq:counter}   
L_{3}=\begin{bmatrix}
0  & 0 &  0 \\
 -1-0.5i & 2+i & -1-0.5i\\ 
 0& 0& 0 \\
\end{bmatrix}
\end{equation}  
In this example, the digraph is weakly connected and thus the states do not converge (as shown in Fig.~\ref{fig:counter} to the same value. 
\section{CONCLUSIONS \& FUTURE WORK}
\looseness=-1
This paper introduces and investigates the property of real EEP in complex matrices. We derive several equivalent conditions to determine whether or not a complex-valued Laplacian matrix satisfies the real EEP. Under mild regularity conditions, we show that this property ensures consensus in the underlying unsigned Laplacian flow system, irrespective of whether the network is directed or undirected. Simulation results on synthetic and a power network system are presented to support our theoretical claims. 

Our results pave the way for many future research directions. Chief among them is to extend the results for signed graphs. We have seen that the weight-balanced digraphs are studied for now; work is currently in progress for the digraphs which are not weight-balanced. 

\addtolength{\textheight}{-12cm}   
       


\bibliography{References}
\bibliographystyle{unsrt}
\end{document}